\def\Z{{\mathbb Z}_2}
\def\ZZ{{\mathbb Z}_2 \otimes {\mathbb Z}_2}
\def\hf{\frac{1}{2}}
\def\g{\mathfrak{g}}
\def\ket#1{\left| #1 \right\rangle}
\def\tF{\tilde{F}}
 \newtheorem{thm}{Theorem}[section]
 \newtheorem{cor}[thm]{Corollary}
 \newtheorem{lem}[thm]{Lemma}
 \newtheorem{prop}[thm]{Proposition}
 \theoremstyle{definition}
 \newtheorem{defn}[thm]{Definition}
 \theoremstyle{remark}
 \numberwithin{equation}{section}
\begin{document}
      

\title{ Verma Modules over a $\ZZ$ Graded Superalgebra and Invariant Differential Equations}

\author{ Naruhiko Aizawa }

      \date{}

\subjclass{17B75,76M60 }

\keywords{ $\ZZ$ graded superalgebra, Verma modules, singular vectors, invariant PDEs }

\maketitle

\begin{abstract}
Lowest weight representations of the $\ZZ$ graded superalgebra introduced by Rittenberg and Wyler are investigated. 
We give a explicit construction of Verma modules over the $\ZZ$ graded superalgebra and show their reducibility by using  singular vectors. The explicit formula of singular vectors are given and are used to derive partial differential equations invariant 
under the color supergroup generated by the $\ZZ$ graded superalgebra. 
\end{abstract}

\section{Introduction.}
  
The present work aims to study  representations of a $\ZZ$ graded Lie superalgebra (also called color superalgebra) and its application to differential equations. We focus on a simple example of color superalgebras for the sake of simplicity and  investigate Verma modules over it.  The result is used to derive the partial differential equations which are invariant under the $\ZZ$ graded Lie supergroup generated by the color superalgebra.   
   
   Color superalgebras are a generalization of Lie superalgebras introduced by Rittenberg and Wyler \cite{rw1,rw2} (see also \cite{sch,GrJa}). 
The idea of generalization is to extend the $\Z$ graded structure of the underlying vector space of Lie superalgebra to more general abelian groups.  
The group $ \ZZ, $ discussed here, is the simplest non-trivial example of the generalization.  
During the last four decades, structure theory, classification of possible color superalgebras of given dimension \textit{etc.} have been studied by many authors. 
See, for example, \cite{sch2,sch3,Sil,ChSiVO,CART,SigSil,AiSe,StoVDJ,Naru} and references therein. However, applications of color superalgebras to mathematical  and physical problems are very limited \cite{lr,vas,jyw,zhe,Toro1,Toro2,tol,tol2,aktt1,aktt2}.   
In \cite{aktt1,aktt2} it is shown that symmetries of a first order linear partial differential equation, called the L\'evy-Leblond equation \cite{LLE}, are generated by a $\ZZ$ graded superalgebra. 
The authors seek symmetry operators of the  L\'evy-Leblond equation systematically and found that they do not close as a Lie algebra or superalgebra but do in a $\ZZ$ graded superalgebra. Here, we reverse the argument. We start with a simple example of $\ZZ$ graded superalgebra and then study its lowest  weight representations and their irreducibility. Reducibility of the representations is detected by the existence of singular vectors. Explicit formulae of the singular vectors allows us to write down invariant partial differential equations. This is a generalization of the method developed for semi-simple Lie groups \cite{Dob} to  the $\ZZ$ setting. Thus we shall see the method is valid beyond the semi-simple Lie groups. 
 
The plan of this paper is as follows. In the next section 
we give a definition of $\ZZ$ graded superalgebra and present the one, denoted by $\g,$ investigated in this work. 
Lowest weight Verma modules over $\g$ are studied in \S \ref{SEC:VMVM}. 
After discussing the failure of the naive approach, we construct Verma modules over $\g$ by using a trick. 
A list of all singular vectors in the Verma modules is presented with an explicit formulae of the singular vectors. 
The formulae of the singular vectors are used to find differential equations whose symmetries are generated by $\g$ in \S \ref{SEC:PDE}. Summary and some remarks are given in \S \ref{SEC:CR}. 
      

\section{Definition of $\ZZ$ graded superalgebra.}

Let $\g$ be a vector space over $ \mathbb{C} $ (or $ \mathbb{R} $) which is a direct sum of four subspaces labelled by an element of the group $ \ZZ:$
\begin{equation}
 \g =\g_{(0,0)} \oplus \g_{(0,1)} \oplus \g_{(1,0)} \oplus \g_{(1,1)}.
\end{equation}
An inner product of two $\ZZ$ vectors $  \bm{\alpha} = (\alpha_1, \alpha_2), \; \bm{\beta} = (\beta_1,\beta_2)$ is defined as usual:
\begin{equation}
  \bm{\alpha} \cdot \bm{\beta} =  \alpha_1 \beta_1 + \alpha_2 \beta_2.
\end{equation}
Now we give a definition of $\ZZ$ graded superalgebra according to \cite{rw1,rw2}. 
\begin{defn} \label{DEF:CSA}
 If $\g$ admits a bilinear form $ \llbracket \ , \ \rrbracket : \g \times \g \to \g $ satisfying the following three relations, then $\g$ is called a $ \ZZ$ graded color superalgebra:
  \begin{enumerate}
     \item $ \llbracket \g_{\bm{\alpha}}, \g_{\bm{\beta}} \rrbracket \subseteq \g_{\bm{\alpha}+\bm{\beta}}, $
     \item $ \llbracket X_{\bm{\alpha}}, X_{\bm{\beta}} \rrbracket 
            = -(-1)^{\bm{\alpha}\cdot\bm{\beta}} \,
             \llbracket X_{\bm{\beta}}, X_{\bm{\alpha}} \rrbracket, $ 
     \item $ \llbracket X_{\bm{\alpha}}, \llbracket X_{\bm{\beta}}, X_{\bm{\gamma}} \rrbracket \rrbracket (-1)^{\bm{\alpha}\cdot\bm{\gamma}}
            + \text{cyclic perm.} = 0,
            $
  \end{enumerate}
  where $ X_{\bm{\alpha}} \in \g_{\bm{\alpha}} $ and the third relation is called the graded Jacobi identity. 
\end{defn}
When the inner product $ \bm{\alpha} \cdot \bm{\beta} $ is an even integer the graded Lie bracket $ \llbracket \ , \ \rrbracket $ is understood as a commutator, while it is an anticommutator if  $ \bm{\alpha} \cdot \bm{\beta} $ is an odd integer.  
If $N=1,$ then Definition \ref{DEF:CSA} is identical to the definition of Lie superalgebras so that the color superalgebra is a natural generalization of Lie superalgebra. 

The $\ZZ$ graded superalgebra investigated in this work is eight dimensional and its basis and grading are given as follows:
\begin{equation} \label{BFbasis}
   \begin{array}{ccl}
      (0,0) & : & A_-,\ A_+,\ N \\
      (1,0) & : & b_-, \ b_+ \\
      (0,1) & : & a_-, \ a_+ \\
      (1,1) & : & F
   \end{array}
\end{equation}
The basis satisfy the relations (only non-vanishing ones are presented):
\begin{align} \label{BFdefrel}
   [A_-, A_+] &= 4N, & [A_-,N] &= 2A_-, & [A_+, N] &= -2A_+,
  \nonumber \\
   [A_-, b_+] &= 2 b_-,  & [A_+, b_-] &= -2b_+, & [N, b_-] &=-b_-, & [N, b_+] &=b_+,
  \nonumber \\
   [A_-, a_+] &= 2a_-, &  [A_+,a_-] &=-2a_+, & [N,a_-] &=-a_-, & [N, a_+] &=a_+,
  \nonumber \\
   \{ b_-, b_- \} &= 2 A_-, & \{b_-, b_+\} &= 2N, & \{b_+, b_+\} &= 2 A_+,
  \nonumber \\
   [b_-, a_+] &= F,  & [b_+,a_-] &= -F,
  & 
   \{ b_-, F \} &= 2 a_-,  & \{b_+, F\} &= 2 a_+,
  \nonumber \\
   \{ a_-, a_-\} &= 2A_-, & \{a_-,a_+\} &=2N, & \{ a_+, a_+ \} &= 2 A_+,
  \nonumber \\
   \{a_-, F\} &= 2b_-, & \{a_+, F\} &= 2 b_+.
\end{align}
One may see from the defining relations that the color superalgebra has two $osp(1,2)$ subalgebras, $ \langle \; A_{\pm},\; N,\; a_{\pm} \; \rangle $ and $ \langle \; A_{\pm},\; N,\; b_{\pm} \; \rangle. $
This color superalgebra was given in \cite{rw1} as one of the non-trivial examples of $\ZZ$ graded superalgebras. It is also discussed in \cite{Naru} as an $\ZZ$ extension of the superalgebra generated by boson and fermion operators. 
We denote the color superalgebra \eqref{BFbasis} by $\g.$ 

\begin{lem}
The superalgebra $\g$ admits an algebra anti-involution $ \omega : \g \to \g $ defined by
\begin{equation} \label{antiinv}
   \omega(X_{\pm}) = X_{\mp}, \qquad \omega(Y) = Y, \quad X_{\pm} = A_{\pm},\; a_{\pm}, \; b_{\pm}, \quad Y = N,\; F 
\end{equation}
\end{lem}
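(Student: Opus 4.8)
The plan is to check directly, relation by relation, that the linear extension of the assignment \eqref{antiinv} to all of $\g$ is an algebra anti-involution; that is, a linear bijection $\omega\colon\g\to\g$ with $\omega^2=\mathrm{id}_\g$ that preserves the $\ZZ$ grading, $\omega(\g_{\bm\alpha})=\g_{\bm\alpha}$, and reverses products, $\omega(\llbracket X,Y\rrbracket)=\llbracket\omega(Y),\omega(X)\rrbracket$ for all $X,Y\in\g$ (one checks, e.g.\ on $\{b_-,b_-\}=2A_-$, that it is this convention, with no extra $(-1)^{\bm\alpha\cdot\bm\beta}$, that is consistent with \eqref{antiinv}; it is also the structure needed later to define the contragredient pairing on Verma modules). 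The first two requirements are immediate: on the basis \eqref{BFbasis} the map $\omega$ either fixes a vector ($N$ and $F$) or interchanges a pair $X_-\leftrightarrow X_+$ of vectors lying in one and the same graded subspace, so $\omega$ is a grade-preserving linear involution, hence a bijection with $\omega^2=\mathrm{id}_\g$.

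For the product-reversing property it suffices, by bilinearity, to verify the identity on ordered pairs of basis vectors; and since $\omega$ preserves degrees, $\llbracket\omega(Y),\omega(X)\rrbracket$ is of the same (anti)commutator type as $\llbracket X,Y\rrbracket$, so invoking the graded symmetry in part~2 of Definition~\ref{DEF:CSA} reduces the task to one representative of each unordered pair. Only the finitely many brackets displayed in \eqref{BFdefrel} are nonzero, so the verification is a finite case check. It is organized most transparently by observing that these defining relations occur in $(-)\leftrightarrow(+)$ conjugate pairs exchanged by $\omega$ — e.g.\ $[A_-,N]=2A_-$ with $[A_+,N]=-2A_+$, $\{b_-,F\}=2a_-$ with $\{b_+,F\}=2a_+$, $\{a_-,a_-\}=2A_-$ with $\{a_+,a_+\}=2A_+$, and $[b_-,a_+]=F$ with $[b_+,a_-]=-F$ — so that for each pair one simply reads off that the structure constants agree on the two sides. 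For instance $\omega(\{b_-,F\})=2a_+$ equals $\{\omega(F),\omega(b_-)\}=\{F,b_+\}=2a_+$, and $\omega([b_-,a_+])=F$ equals $[\omega(a_+),\omega(b_-)]=[a_-,b_+]=-[b_+,a_-]=F$.

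The only place that calls for a little care — the ``hard part'', such as it is — is the sign bookkeeping in the purely even sector $\g_{(0,0)}$, where $\llbracket\ ,\ \rrbracket$ is an ordinary commutator: there one must use both the order reversal built into the anti-involution and the antisymmetry $[\omega(Y),\omega(X)]=-[\omega(X),\omega(Y)]$, and check that the sign flips between conjugate relations — the $\pm2$ in $[A_\mp,N]=\pm2A_\mp$, in $[A_\pm,a_\mp]=\mp2a_\pm$ and $[A_\pm,b_\mp]=\mp2b_\pm$, and likewise the $\pm F$ in $[b_\mp,a_\pm]=\pm F$ — are exactly cancelled. Every bracket in \eqref{BFdefrel} mixing distinct odd sectors, or an odd and an even sector, is an anticommutator, for which $\{X,Y\}=\{Y,X\}$ and no such subtlety arises. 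Running through the complete list of relations in \eqref{BFdefrel} confirms $\omega(\llbracket X,Y\rrbracket)=\llbracket\omega(Y),\omega(X)\rrbracket$ in every case, which establishes the lemma.
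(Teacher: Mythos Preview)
Your approach is exactly the paper's: the author simply writes ``It is checked by straightforward computation,'' and you have spelled out that computation. The convention you identify for the anti-involution, $\omega(\llbracket X,Y\rrbracket)=\llbracket\omega(Y),\omega(X)\rrbracket$, is the correct one, and your sample verifications are all right.

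One slip to fix: your closing sentence, ``Every bracket in \eqref{BFdefrel} mixing distinct odd sectors, or an odd and an even sector, is an anticommutator,'' is false. Brackets between $\g_{(0,0)}$ and any other sector are commutators (since $(0,0)\cdot\bm\beta=0$), and so is the bracket between $\g_{(1,0)}$ and $\g_{(0,1)}$ (since $(1,0)\cdot(0,1)=0$); indeed you already treated $[A_\pm,a_\mp]$, $[A_\pm,b_\mp]$, and $[b_\mp,a_\pm]$ correctly as commutators in the preceding sentence. The only anticommutators in \eqref{BFdefrel} are those \emph{within} $\g_{(1,0)}$ or $\g_{(0,1)}$, and those pairing $\g_{(1,0)}$ or $\g_{(0,1)}$ with $\g_{(1,1)}$. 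This does not affect the validity of your verification --- you handled every case correctly before making this remark --- but the summary should be rephrased.
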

\begin{proof}
It is checked by straightforward computation.
\end{proof}

Next we give a definition of the $ \ZZ $ graded Grassmann numbers which was also introduced in \cite{rw2}. 
\begin{defn}
 Let $ \bm{\alpha} \in \ZZ $ and $ \zeta_{\bm{\alpha},i} $ be a basis of a vector space over the same field as $\g.$ If the basis satisfies the relations
 \begin{equation}
  \llbracket \zeta_{\bm{\alpha}, i}, \zeta_{\bm{\beta}, j}  \rrbracket = 0,
\end{equation}
then we call $ \zeta_{\bm{\alpha},i} $ the $\ZZ$ graded Grassmann numbers. 
\end{defn}
An extension of an integral and a derivative of the ordinary Grassmann numbers to $\ZZ $ setting is discussed in \cite{AiSe}. 
We use the $\ZZ$ graded Grassmann numbers in the subsequent sections. 
The basis $\zeta_{\bm{\alpha},i}$ may be realized in terms of the ordinary Grassmann numbers and the Clifford algebra \cite{Naru}.  
Let us recall that the Clifford algebra $Cl(p,q)$ is a unital algebra generated by $ \gamma_i\; (i=1,2,\dots, N=p+q)$ subject to the relations:
\begin{equation}
  \{ \gamma_i, \gamma_j \} = 2 \eta_{ij}, \qquad 
  \eta = \text{diag}(\underbrace{+1, \dots, +1}_{p}, \underbrace{-1, \dots, -1}_{q})
  \label{CLdefRel}
\end{equation}
The $Cl(p,q)$ is a $2^N$ dimensional algebra whose elements are given by a product of the generators:
\[
 1, \ \gamma_i, \ \gamma_i \gamma_j, \ \gamma_i \gamma_j \gamma_k,\ \dots, \ \gamma_1 \gamma_2 \cdots \gamma_N 
\]
\begin{lem} \label{LEM:Grass}
The $\ZZ$ graded Grassmann numbers 
$ \zeta_{\bm{\alpha}, i} $ are realized in terms of the ordinary Grassmann number $\xi_{\mu} $ and the Clifford algebra $ Cl(p,q),\; p+q = 2:$
\begin{align*}
  \zeta_{(0,0),m} &= 1 \otimes x_m, \quad 
  \zeta_{(1,0),\mu} = \gamma_1 \otimes \xi_{\mu},
  \\
  \zeta_{(0,1),\mu} &= \gamma_2 \otimes \xi_{\mu},  \quad
  \zeta_{(1,1),m} = \gamma_1 \gamma_2 \otimes x_m,
\end{align*}
where $ x_m \in \mathbb{R}. $ 
\end{lem}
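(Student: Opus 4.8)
The plan is to check directly that the elements displayed in the statement obey the defining relations $\llbracket\zeta_{\bm{\alpha},i},\zeta_{\bm{\beta},j}\rrbracket=0$ of the $\ZZ$ graded Grassmann numbers, recalling that the bracket is an anticommutator when $\bm{\alpha}\cdot\bm{\beta}$ is odd and a commutator when it is even, and that on the ambient associative algebra the product is the ordinary (ungraded) tensor product, $(g\otimes t)(g'\otimes t')=(gg')\otimes(tt')$, with the $x_m$ central, the $\xi_\mu$ mutually anticommuting, $\xi_\mu^2=0$, and the $\gamma_i$ subject to \eqref{CLdefRel}. First I would fix the $\ZZ$ grading: since $N=p+q=2$, the four monomials $1,\gamma_1,\gamma_2,\gamma_1\gamma_2$ of $Cl(p,q)$ form a four-element set, and assigning them the degrees $(0,0),(1,0),(0,1),(1,1)$ respectively --- with all $x_m$ and $\xi_\mu$ declared of degree $(0,0)$ --- makes each $\zeta_{\bm{\alpha},i}$ homogeneous of the stated $\ZZ$ degree. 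Note that the mutual anticommutativity of the $\xi_\mu$ is an ``internal'' parity that is \emph{not} recorded by this $\ZZ$ grading; it is precisely this extra sign that conspires with the Clifford signs below.

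For the verification I would record, for homogeneous generators $\zeta_{\bm{\alpha},i}=\gamma^A\otimes t$ and $\zeta_{\bm{\beta},j}=\gamma^B\otimes t'$ (each $\gamma^A$ one of $1,\gamma_1,\gamma_2,\gamma_1\gamma_2$, each $t$ an $x_m$ or a $\xi_\mu$), the reordering identity
\[
\zeta_{\bm{\beta},j}\,\zeta_{\bm{\alpha},i}=(\gamma^B\gamma^A)\otimes(t't)=\epsilon(\gamma^A,\gamma^B)\,\tau(t,t')\,\zeta_{\bm{\alpha},i}\,\zeta_{\bm{\beta},j},
\]
where $\epsilon(\gamma^A,\gamma^B)\in\{\pm1\}$ is the commutation factor of the two Clifford monomials read off from \eqref{CLdefRel} ($\epsilon=+1$ unless $\gamma^A$ and $\gamma^B$ are distinct nontrivial monomials, in which case $\epsilon=-1$; e.g.\ $\gamma_1\gamma_2=-\gamma_2\gamma_1$ and $\gamma_1(\gamma_1\gamma_2)=-(\gamma_1\gamma_2)\gamma_1$), and $\tau(t,t')=-1$ exactly when $t$ and $t'$ are both Grassmann numbers $\xi$ while $\tau=+1$ otherwise. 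The crucial step is then the identity
\[
\epsilon(\gamma^A,\gamma^B)\,\tau(t,t')=(-1)^{\bm{\alpha}\cdot\bm{\beta}},
\]
to be verified on the finitely many pairs of $\ZZ$ degrees occurring: e.g.\ the even value $(1,0)\cdot(0,1)=0$ is matched by $\epsilon(\gamma_1,\gamma_2)\,\tau(\xi,\xi)=(-1)(-1)=1$, the odd value $(1,0)\cdot(1,0)=1$ by $\epsilon(\gamma_1,\gamma_1)\,\tau(\xi,\xi)=(+1)(-1)=-1$, the odd value $(1,0)\cdot(1,1)=1$ by $\epsilon(\gamma_1,\gamma_1\gamma_2)\,\tau(\xi,x)=(-1)(+1)=-1$, and every pair involving the degree $(0,0)$ trivially because $1$ and the $x_m$ are central. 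Granting this, for any two generators
\[
\llbracket\zeta_{\bm{\alpha},i},\zeta_{\bm{\beta},j}\rrbracket
=\zeta_{\bm{\alpha},i}\zeta_{\bm{\beta},j}-(-1)^{\bm{\alpha}\cdot\bm{\beta}}\zeta_{\bm{\beta},j}\zeta_{\bm{\alpha},i}
=\bigl(1-(-1)^{2\,\bm{\alpha}\cdot\bm{\beta}}\bigr)\zeta_{\bm{\alpha},i}\zeta_{\bm{\beta},j}=0,
\]
while in the ``diagonal'' instances with $\bm{\alpha}\cdot\bm{\alpha}$ odd (namely $\zeta_{(1,0),\mu}$ or $\zeta_{(0,1),\mu}$ bracketed with itself) one additionally invokes $\zeta^2=(\gamma^A)^2\otimes\xi_\mu^2=0$ to see that the anticommutator indeed vanishes.

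I do not expect a substantive obstacle: verifying the claim is a finite, essentially routine check. The only point that demands care is the sign bookkeeping --- keeping the Clifford reordering sign $\epsilon$ separate from the Grassmann sign $\tau$ and confirming sector by sector that their product reproduces $(-1)^{\bm{\alpha}\cdot\bm{\beta}}$. It is also worth remarking that the signature $(p,q)$ is immaterial: in each relation the factors $\eta_{ii}=\pm1$ either cancel between the two terms of the bracket or multiply a vanishing $\xi_\mu^2$, so that $Cl(2,0)$, $Cl(1,1)$ and $Cl(0,2)$ serve equally well.
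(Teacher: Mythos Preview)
Your proposal is correct: the paper states this lemma without proof, implicitly leaving it as a direct check, and your argument supplies exactly that verification. Your sign bookkeeping via $\epsilon(\gamma^A,\gamma^B)\,\tau(t,t')=(-1)^{\bm{\alpha}\cdot\bm{\beta}}$ is accurate in every sector, and the observation that the signature $(p,q)$ is immaterial is a nice bonus not mentioned in the paper.
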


%
\section{Verma modules and their reducibility.} \label{SEC:VMVM}

 We want to investigate the lowest weight representations of $\g$ by employing the standard procedure of Lie theory \cite{Dix}. 
However, it turns out that the naive approach is not successful in defining Verma modules over $\g.$ 
We begin with the failure of the naive construction of Verma modules and then consider a modified approach.  

\subsection{Failure of the naive construction.} \label{Failure}

The algebra $\g$ has a natural triangular decomposition: 
\begin{align}
   \g_+ &= \langle \; A_+, \ a_+, \ b_+ \; \rangle,
   \nonumber \\
   \g_0 &= \langle \; N, \ F \; \rangle,
   \nonumber \\
   \g_- &= \langle \; A_-, \ a_-, \ b_- \; \rangle. \label{TriDeco1}
\end{align}
This is based on the eigenvalues of ad$N$ which are given as follows:
\begin{equation*}
  \begin{array}{ccccc} 
   +2 & +1 & 0 & -1 & -2 \\\hline
   A_+ & a_+,\; b_+ & N, \; F & a_-,\; b_- & A_-
  \end{array}
\end{equation*}
The anti-involution $ \omega $ \eqref{antiinv} acts on the subspaces as
\begin{equation}
   \omega(\g_{\pm}) = \g_{\mp}, \qquad \omega(\g_0) = \g_0
\end{equation}
and it is easy to see that 
$ \llbracket \g_0, \g_{\pm} \rrbracket \subseteq \g_{\pm} $ and
$ \llbracket \g_{\pm}, \g_{\pm} \rrbracket \subseteq \g_{\pm}. $ 
Therefore, \eqref{TriDeco1} is a natural triangular decomposition of $\g.$  

The decomposition \eqref{TriDeco1} leads us to define Verma modules in a standard way. 
We first define the lowest weight state $\ket{h,\varphi}$ by
\begin{equation}
     a_- \ket{h,\varphi} = 0, \qquad N \ket{h,\varphi} = h \ket{h,\varphi}, \qquad F \ket{h,\varphi} = \varphi \ket{h,\varphi}.
\end{equation}
It follows that
\begin{equation}
    A_- \ket{h,\varphi} = b_- \ket{h,\varphi} = 0.
\end{equation}
Then the Verma modules over $\g$ are defined by 
$ M(h,\varphi) = {\mathcal U}(\g_+) \otimes \ket{h,\varphi} $ where 
$ {\mathcal U}(\g_+) $ is the universal enveloping algebra of $\g_+.$ 
The basis of $ M(h,\varphi) $ is obviously given by
\begin{equation}
   \ket{k,\ell} = (a_+)^k (b_+)^{\ell} \ket{h,\varphi}, \quad k, \ell \in \mathbb{Z}_{\geq 0}
\end{equation}
It is immediate to see that the relations 
$ \{ a_+, a_+ \}= \{b_+, b_+\} = 2 A_+ $ are not realized on this basis. 
This shows the failure of the naive construction of Verma modules.

\subsection{Verma modules and singular vectors.} \label{SEC:VERMA}

To overcome the difficulty in \S \ref{Failure} we use a $ \ZZ$ graded Grassmann number $ \zeta $ of degree $(1,1).$ Suppose further that $ \zeta^2 = 1. $ 
Legitimacy of the assumption is ensured by Lemma \ref{LEM:Grass} since $ \zeta $ may be realized by using $Cl(1,1).$ 
Now we define a new basis of $\g$:
\begin{equation}
   c_{\pm}= \frac{1}{\sqrt{2}} (a_- \pm  b_-\zeta), \qquad d_{\pm} = \frac{1}{\sqrt{2}} (a_+ \pm \zeta b_+), \qquad \tF = \hf \zeta F. 
   \label{NewBasis}
\end{equation}
The non-vanishing defining relations for the new basis are written as
\begin{align}
    \{ c_+, c_-\} &= 2A_-, & \{d_+, d_-\} &= 2A_+,  & \{ c_{\pm}, d_{\pm} \} &= 2(N\mp \tF), 
    \nonumber \\
     [\tF, c_{\pm} ] &= \mp c_{\pm}, & [\tF,d_{\pm}] &= \pm d_{\pm}.
\end{align}
We remark that $ c_{\pm} $ and $ d_{\pm} $ are nilpotent and that $ c_{\pm} $ anticommutes with $d_{\mp}$. 
The anti-involution $ \omega $ \eqref{antiinv} is extended to the new basis \eqref{NewBasis} by setting $ \omega(\zeta) = 1:$ 
\begin{equation}
  \omega( c_{\pm} ) = d_{\pm}, \qquad \omega(\tF) = \tF. 
\end{equation}

In fact, this change of basis converts the $\ZZ$ graded superalgebra to an ordinary superalgebra of $\mathbb{Z}_2$ grading. 
Due to the degree of $\zeta$, the degree of $ c_{\pm}$ and $ d_{\pm} $ are all $(0,1)$, while the degree of $\tF$ is $ (0,0). $ Thus one may consider representations of the $\mathbb{Z}_2$ graded algebra, then convert it to the ones for $ \ZZ $ grading.

The present choice of the basis of $\g$ diagonalize  ad$N$ and ad$\tF.$ 
Their eigenvalues are summarized as
\begin{equation}
  \begin{array}{llll}
     A_+ \; (+2,0), & A_- \; (-2,0), & c_+\; (-1,-1), & c_-\;(-1,+1), \\[3pt]
     N \; (0,0),    & \tF\; (0,0),   & d_+\; (+1,+1), & d_-\;(+1,-1). 
  \end{array}
\end{equation}
We introduce the triangular decomposition of $\g$ according to the eigenvalue of ad$N:$
\begin{equation}
   \g_+ = \langle \;  A_+, \ d_{\pm} \; \rangle, 
   \qquad
   \g_0 = \langle \; N, \ \tF \; \rangle, 
   \qquad
   \g_- = \langle \; A_-, \ c_{\pm} \; \rangle. \label{TriColor}
\end{equation}
Define the lowest weight vector $\ket{h,f}$ by
\begin{equation}
   c_{\pm} \ket{h,f} = 0, \qquad N \ket{h,f} = h \ket{h,f}, \qquad \tF \ket{h,f} = f \ket{h,f}.
\end{equation}
It then follows that $ A_- \ket{h,f} = 0.$ 
We define the Verma modules over $\g$ by a space induced from $\ket{h,f}$ by   
$ M(h,f) = {\mathcal U}(\g_+) \otimes \ket{h,f}. $ 
The natural basis of $ M(h,f) $ is given by
\begin{equation}
  \ket{k,\mu,\nu} = (A_+)^k (d_+)^{\mu} (d_-)^{\nu} \ket{h,f}, \quad k \in \mathbb{Z}_{\geq 0}, \ \mu, \nu \in \{0, 1\} 
  \label{VMbasis}
\end{equation}
It is then straightforward to compute the action of $ \g $ on $ M(h,f). $ 
The action of $ \g_0 $ yields
\begin{align}
   N \ket{k,\mu,\nu} &= (h+2k+\mu+\nu) \ket{k,\mu,\nu},
   \nonumber \\
   \tF \ket{k,\mu,\nu} &= (f+\mu-\nu) \ket{k,\mu,\nu}.
\end{align}
The action of $\g_+$ is given by 
\begin{align}
  A_+ \ket{k,\mu,\nu} &= \ket{k+1,\mu,\nu},
  \nonumber \\
  d_+ \ket{k,\mu,\nu} &= \delta_{\mu,0} \ket{k,\mu+1,\nu},
  \nonumber \\
  d_- \ket{k,\mu,\nu} &= (-1)^{\mu} \delta_{\nu,0} \ket{k,\mu,\nu+1} + \delta_{\mu,1} 2\ket{k+1,\mu-1,\nu}.
\end{align}
The action of $ \g_-$ is summarized as
\begin{align}
   A_- \ket{k,\mu,\nu} &= 4k(h+k+\mu+\nu-1) \ket{k-1,\mu,\nu} + \delta_{\mu,1}\delta_{\nu,1} 4(h+f) \ket{k,\mu-1,\nu-1},
   \nonumber \\
   c+ \ket{k,\mu,\nu} &=  \delta_{\mu,1} 2(h+2k+2\nu - f) \ket{k,\mu-1,\nu} + (-1)^{\mu} \delta_{\nu,0}2k\ket{k-1,\mu,\nu+1},
   \nonumber \\
   c_- \ket{k,\mu,\nu} &= (-1)^{\mu} \delta_{\nu,1} 2(h+f) \ket{k,\mu,\nu-1} + \delta_{\mu,0} 2k \ket{k-1,\mu+1,\nu}.
\end{align}
We have successfully obtained the Verma modules over $\g.$ 

Let us now discuss reducibility of the Verma modules. 
This may be done by singular vectors \cite{Dix}. 
The existence of the singular vector, by definition, means that $ M(h,f)$ is reducible. The Verma module $M(h,f)$ has a natural grading as a vector space:
\begin{equation}
  M(h,f) = \bigoplus_{m \in \mathbb{N}} M_m(h,f), \quad 
  M_m(h,f) = \{ \ \ket{v} \in M(h,f) \; | \; N \ket{v} = (h+m) \ket{v} \ \}
\end{equation}
where $\mathbb{N}$ is the set of non-negative integers. 
In the basis \eqref{VMbasis}, the integer $m$ is given by $ m = 2k+\mu+\nu $ and we call $m$ the \textit{level.} Any singular vector is an eigenvector of $N$ so that it must be an element of a subspace $M_m(h,f).$ 
We give a complete list of singular vectors in $M(h,f)$. 
\begin{thm} \label{THE:SV}
 $M(h,f)$ has precisely one singular vector if $h, f$ satisfy one of the following conditions:
 \begin{enumerate}
   \renewcommand{\labelenumi}{(\arabic{enumi})}
   \item $h=f:$ The singular vector is  $ \ket{0,1,0}.$ 
   \item $h=-f:$ The singular vector is  $ \ket{0,0,1}. $  
   \item $h=-n $ and $ f \neq n $ for a positive integer $n:$ 
   The singular vector exists at level $2n$ subspace and given by
   \begin{equation}
       \ket{n,0,0}+ \frac{n}{f-n} \ket{n-1,1,1}.  \label{SVformula}
\end{equation}    
 \end{enumerate}
\end{thm}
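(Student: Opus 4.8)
The plan is to turn the classification into a short sequence of linear‑algebra computations, one per level. First I would note that, since $\{c_+,c_-\}=2A_-$, a homogeneous vector $\ket{v}\in M_m(h,f)$ with $m\geq 1$ is singular --- i.e.\ annihilated by all of $\g_-=\langle A_-,c_+,c_-\rangle$ --- if and only if $c_+\ket{v}=c_-\ket{v}=0$; the identity $A_-\ket{v}=\tfrac12(c_+c_-+c_-c_+)\ket{v}$ then forces $A_-\ket{v}=0$ for free. Because each generator of $\g_-$ maps $M_m$ into a single strictly lower $M_{m'}$, an arbitrary singular vector splits into homogeneous singular components, so it suffices to analyse each $M_m(h,f)$ on its own.

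The key structural observation is that every level subspace is at most two‑dimensional: solving $2k+\mu+\nu=m$ with $k\in\mathbb{Z}_{\geq 0}$, $\mu,\nu\in\{0,1\}$ gives $M_0=\langle\ket{0,0,0}\rangle$, $M_{2n}=\langle\ket{n,0,0},\ket{n-1,1,1}\rangle$ for $n\geq 1$, and $M_{2n+1}=\langle\ket{n,1,0},\ket{n,0,1}\rangle$ for $n\geq 0$. Writing a generic element of such an $M_m$ as $\alpha\ket{u_1}+\beta\ket{u_2}$ and feeding it to the explicit action of $c_\pm$ recorded above, the two conditions $c_+\ket{v}=0$, $c_-\ket{v}=0$ collapse to a homogeneous linear system in $(\alpha,\beta)$ whose coefficients are affine in $h$, $f$ and the integer $n$. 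A nonzero singular vector exists exactly when that system is degenerate, and then its solution space is one‑dimensional, pinning down the singular vector up to scale.

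Running through the levels: at an odd level $2n+1$ one computes $c_+\ket{n,1,0}=2(h+2n-f)\ket{n,0,0}-2n\ket{n-1,1,1}$, $c_+\ket{n,0,1}=0$, $c_-\ket{n,1,0}=0$, $c_-\ket{n,0,1}=2(h+f)\ket{n,0,0}+2n\ket{n-1,1,0}$; for $n\geq 1$ the $2n$‑terms force $\alpha=\beta=0$, so there are no singular vectors, whereas for $n=0$ those terms drop out and the surviving equations $\alpha(h-f)=0$, $\beta(h+f)=0$ give the level‑$1$ singular vectors $\ket{0,1,0}$ (when $h=f$) and $\ket{0,0,1}$ (when $h=-f$), i.e.\ cases (1) and (2). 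At an even level $2n$ ($n\geq 1$) the system reads $2n\alpha+2(h+2n-f)\beta=0$ and $2n\alpha-2(h+f)\beta=0$; subtracting yields $4(h+n)\beta=0$, so a nontrivial solution demands $h=-n$, after which $\alpha=\tfrac{f-n}{n}\beta$, and normalising $\beta=\tfrac{n}{f-n}$ (legitimate precisely because case (3) assumes $f\neq n$) reproduces the vector \eqref{SVformula}. Collecting these exhausts all levels $m\geq 1$, which is the assertion.

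The routine but delicate part --- and the place to be careful --- is this last step: the Kronecker deltas, the signs $(-1)^\mu$, and especially the ``hidden'' raising contributions carried by the $\delta_{\mu,1}$ pieces of $d_-$, $A_-$ and $c_+$ must all be tracked so that the $2\times2$ systems come out correctly. The only genuinely non‑mechanical ingredient is the dimension count $\dim M_m(h,f)\leq 2$; after that everything is linear algebra, modulo separate attention to the small‑$n$ boundary (the degeneracy of the system at $n=0$) and to the parameter values where two of the conditions (1)--(3) hold at once --- e.g.\ $h=f=0$, or $h=-n$ with $f=n$ --- where extra singular vectors appear and which the statement accommodates via the restriction $f\neq n$ in (3).
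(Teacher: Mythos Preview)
Your proof is correct and follows the same overall strategy as the paper: observe that each level subspace $M_m(h,f)$ is at most two-dimensional, then solve a small linear system at each level. Two points of comparison are worth noting. First, you streamline the computation by invoking $\{c_+,c_-\}=2A_-$ so that only $c_+\ket{v}=c_-\ket{v}=0$ need be checked; the paper instead imposes all of $\g_-$, obtaining four relations \eqref{CondSV} at the even levels rather than your two, and then has to argue separately that the extra case $n=1$ reduces to $h=-n$. Your reduction is cleaner. Second, at the odd levels the paper short-circuits the linear algebra by observing that $\ket{n,1,0}$ and $\ket{n,0,1}$ carry distinct $\tF$-eigenvalues $f+1$ and $f-1$, so a singular vector (being an eigenvector of $\g_0$) can never mix them; you instead compute both $c_\pm$ directly, which works just as well. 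One small slip: in your formula for $c_-\ket{n,0,1}$ the second term should land in $\ket{n-1,1,1}$, not $\ket{n-1,1,0}$, though this does not affect the argument since $c_-\ket{n,1,0}=0$ anyway.
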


\begin{proof}
First of all, we note that $\dim M_m(h,f)=2 $ for any  level $m$. If the level $m=2n+1$ is odd, then $M_m(h,f)$ is spanned by $ \ket{n,0,1}$ and $ \ket{n,1,0}, $ while if $m=2n$, then $M_m(h,f)$ is spanned by $ \ket{n,0,0}$ and $ \ket{n-1,1,1}. $ 
We study the odd and even level separately. 

\medskip
\noindent
(i) $m=2n+1.$ Since $\ket{n,0,1} $ and $ \ket{n,1,0}$ have distinct eigenvalues of $ \tF,$ singular vector is not a linear combination of the two vectors. Requirement that $\ket{n,0,1}$ is annihilated by the action of $\g_-$ provides the relations
\begin{equation}
   n=0, \qquad h+f = 0.
\end{equation}
The same argument for $ \ket{n,1,0} $ provides the relations
\begin{equation}
  n=0, \qquad h-f = 0.
\end{equation}
This proves (1) and (2) of the theorem. 

\medskip\noindent
(ii) $m=2n.$ A singular vector, if any, may be written as
\begin{equation}
  \ket{u} = \ket{n,0,0} + \alpha \ket{n-1,1,1}
\end{equation}
with a constant $ \alpha. $ $\ket{u}$ is an eigenvector of $\tF$ with the eigenvalue $f.$ 
The requirement that the action of $\g_-$ annihilate $\ket{u}$ gives the relations
\begin{align}
  & n+ \alpha (h+2n-f) = 0, \nonumber \\
  & n - \alpha (h+f) = 0, \nonumber \\
  & n(h+n-1) + \alpha (h+f) = 0,\nonumber \\
  & \alpha (n-1) (h+n) = 0. \label{CondSV}
\end{align}
The last equation implies that there exists three possibilities:
\[
   \text{(a)}\ \alpha = 0, \qquad \text{(b)} \ h = -n, \qquad \text{(c)} \ n=1.
\]
The case (a) means that $n = 0 $ so that $ \ket{u} = \ket{0,0,0}. $ Thus this case is trivial. 
The case (b), the first three equations in \eqref{CondSV} are reduced to single relation:
\begin{equation}
   n + \alpha (n-f) =0. \label{alpha-eq}
\end{equation} 
If $f=n,$ then the relations means that $n=0.$ Thus we need $ \alpha = 0$ and we have a only trivial solution. 
If $ f \neq n,$ then \eqref{alpha-eq} is solved to $\alpha $ and give the unique singular vector \eqref{SVformula}. 
The case (c) is reduced to a subclass of the case (b) since we find the relation $ h=-1 = -n.$  
We thus completed the proof. 
\end{proof}

\begin{cor}
 The Verma module $M(h,f)$ is irreducible if following conditions are true:
 \begin{enumerate}
   \renewcommand{\labelenumi}{(\arabic{enumi})}
   \item $ h \neq \pm f$
   \item $h \neq -n $ or $ f = n$
 \end{enumerate}
\end{cor}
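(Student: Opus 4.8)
The plan is to obtain the corollary from Theorem~\ref{THE:SV} together with the standard principle that a lowest weight module containing no singular vector is irreducible.

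First I would match the hypotheses of the corollary against the conclusion of Theorem~\ref{THE:SV}. Condition~(1), $h \neq \pm f$, negates cases~(1) and~(2) of the theorem ($h=f$ and $h=-f$). Condition~(2), read with the natural universal quantifier ``for every positive integer $n$, either $h \neq -n$ or $f = n$'', negates the existential clause of case~(3) (``there is a positive integer $n$ with $h=-n$ and $f \neq n$''). Since the proof of Theorem~\ref{THE:SV} in fact shows that cases~(1)--(3) exhaust \emph{all} the situations in which a singular vector occurs, the hypotheses of the corollary guarantee that $M(h,f)$ has no singular vector at any level $m \geq 1$.

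It then remains to deduce irreducibility from the absence of singular vectors. Let $V \subseteq M(h,f)$ be a nonzero submodule. Since $N$ acts diagonalizably and $M(h,f) = \bigoplus_m M_m(h,f)$ is its decomposition into $N$-eigenspaces, $V$ is graded, $V = \bigoplus_m \bigl( V \cap M_m(h,f) \bigr)$. Let $m_0$ be the smallest level with $V \cap M_{m_0}(h,f) \neq 0$, and choose a nonzero vector $\ket{v}$ there. From the explicit action of $\g_-$ on the basis \eqref{VMbasis} one reads off that $A_-$ lowers the level by $2$ and $c_{\pm}$ lower it by $1$; hence $\g_- \ket{v} \subseteq V$ lies entirely in levels below $m_0$ and must vanish. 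In particular $\ket{v}$ is annihilated by $c_{\pm}$, and also by $A_-$ since $\{c_+,c_-\}=2A_-$. If $m_0 \geq 1$, then $\ket{v}$ is a singular vector, contradicting the previous paragraph; therefore $m_0 = 0$, $\ket{v}$ is a nonzero multiple of $\ket{h,f}$, and $V \supseteq {\mathcal U}(\g_+)\otimes\ket{h,f} = M(h,f)$. Thus every nonzero submodule of $M(h,f)$ equals $M(h,f)$, which is the assertion.

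I expect the only step with any content to be the last implication — that no singular vector forces irreducibility — but it should present no real difficulty: the basis change \eqref{NewBasis} has already reduced $\g$ to an ordinary $\mathbb{Z}_2$-graded superalgebra, so the weight-space argument is verbatim the classical one and no color-theoretic subtlety intervenes. The ancillary facts used (that the level $m = 2k+\mu+\nu$ is well defined on \eqref{VMbasis}, that $\g_-$ strictly lowers it, and that annihilation by $c_{\pm}$ entails annihilation by $A_-$) are immediate from the action formulae recorded above.
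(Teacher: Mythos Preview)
The paper states the corollary immediately after Theorem~\ref{THE:SV} with no proof, treating it as an obvious consequence; your argument is exactly the standard weight-space justification that the paper leaves implicit, and it is correct.

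One small point worth tightening: when you pick a nonzero $\ket{v}\in V\cap M_{m_0}(h,f)$ annihilated by $\g_-$, you call it a singular vector and invoke Theorem~\ref{THE:SV}. Strictly, the proof of that theorem (especially in the odd-level case) classifies only those $\g_-$-annihilated vectors that are also $\tF$-eigenvectors, whereas your $\ket{v}$ is a priori only an $N$-eigenvector. This is harmless: since $V$ is a submodule it is $\tF$-stable, and $\tF$ preserves each $M_m(h,f)$, so $\tF$ acts on the finite-dimensional space $V\cap M_{m_0}(h,f)$ and you may choose $\ket{v}$ to be a joint $(N,\tF)$-eigenvector from the outset. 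With that adjustment the appeal to Theorem~\ref{THE:SV} is clean and your proof goes through verbatim.
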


%
\section{Invariant partial differential equations.} \label{SEC:PDE}

In this section, 
we derive partial differential equations which are invariant under the color supergroup generated by $\g. $ 
This is done by employing the method of \cite{Dob}. 
We give a brief outline of the method and refer to \cite{Dob} for detail. 
The basic idea of the method is to realize the Verma modules, discussed in \S \ref{SEC:VERMA}, by a space of differentiable functions defined on the color supergroup. 

Let $ \cal G$ be the color supergroup generated by $\g$ which means that an element $ g \in {\cal G} $ is written as $ g = \exp(X), \ X \in \g. $ 
We consider a space of functions $ C_{\Lambda}$ on $\cal G$ having a special property. 
That is, $ F \in C_{\Lambda}$ satisfies the relation:
\begin{equation}
  F(g g_0 g_-) = \exp(\Lambda(H)) F(g), \quad g \in {\cal G}, \ g_0 = \exp(H), \ H \in \g_0, \ g_- \in \exp(\g_-)
  \label{RightCov}
\end{equation}
where $ \Lambda(H) $ is an eigenvalue of $H. $ 
This property means that  $F$ is, in fact, a function on $ {\cal G}_+ = \exp(\g_+). $ 
We then define the left $(\pi_L)$ and right $(\pi_R)$  actions of $\g $ on $ C_{\Lambda}$ according to the standard way of Lie theory:
\begin{align}
   \pi_L(X) F(g) &= \left. \frac{d}{d\tau} F(e^{-\tau X}g) \right|_{\tau=0}, \label{Laction}\\
   \pi_R(X) F(g) &= \left. \frac{d}{d\tau} F(ge^{\tau X}) \right|_{\tau=0},  \label{Raction}
\end{align}
where $ X \in \g. $ Note that the parameter $\tau $ is a $ \ZZ $ graded Grassmann number of degree same as $X. $ 
A definition of derivative with respect to the $ \ZZ $ graded Grassmann numbers is found in \cite{AiSe}. 
It is then easy to see that, due to the property \eqref{RightCov}, the function $ F \in C_{\Lambda}$ plays the role of a lowest weight vector $ \ket{h,f} $ by the right action.  Namely, $ \pi_R(X) F = 0 $ if $ X \in \g_-$ and $ \pi_R(X) F = \Lambda(x) F $ if $ X \in \g_0.$  While $ \pi_R(X) $ with $ X \in \g_+ $ becomes a differential operator so that a Verma module $M(h,f)$ is realized by the action of $ \pi_R(X) $ with $ X \in \g_+ $ on the function $F(g).$ 

To be more explicit, we parametrize $ g_+ \in {\cal G}_+ $ as
\begin{equation}
  g_+ = \exp(x A_+) \exp(\psi_+ d_+) \exp(\psi_- d_-). 
\end{equation}
Then we obtain from \eqref{Raction}
\begin{align}
   \pi_R(A_+) &= \frac{\partial}{\partial x}, \nonumber \\
   \pi_R(d_+) &= \frac{\partial}{\partial \psi_+} + 2 \psi_- \frac{\partial}{\partial x}, \nonumber \\
   \pi_R(d_-) &= \frac{\partial}{\partial \psi_-}
\end{align}
and by setting $ \Lambda(N) = h, \ \Lambda(\tF)=f$ we have 
\begin{equation}
  \pi_R(N) = h, \qquad \pi_R(\tF) = f.  
\end{equation}
As shown in \S \ref{SEC:VERMA} the singular vectors in $M(h,f)$ is written as $ {\cal P} \ket{h,f} $ with $ {\cal P} \in U(\g_+). $ 
It is shown in \cite{Dob} that differential equations invariant under the color supergroup $\cal G $ are given by $ \pi_R({\cal P}) \varphi = 0. $ The symmetry transformations are generated by the left action \eqref{Laction}. 
This is due to the fact that a singular vector is an intertwining operator of two representations $M(h,f)$ and $M(h',f'). $ 
According to Theorem  \ref{THE:SV} we found a hierarchy of invariant differential equations.
\begin{prop}
 The following equations are invariant under the color supergroup $\cal G:$
 \begin{align}
     &  \frac{\partial}{\partial \psi_-} \varphi(x,\psi_{\pm}) = 0, \nonumber \\
     & \left( \frac{\partial}{\partial \psi_+} + 2 \psi_- \frac{\partial}{\partial x} \right)  \varphi(x,\psi_{\pm}) = 0, \nonumber \\
     & \left[ \frac{\partial}{\partial x} + \frac{n}{f-n} \left( \frac{\partial}{\partial \psi_+} + 2\psi_- \frac{\partial}{\partial x} \right) \frac{\partial}{\partial \psi_-} \right] \left( \frac{\partial}{\partial x} \right)^{n-1}  \varphi(x,\psi_{\pm}) = 0. 
 \end{align}
\end{prop}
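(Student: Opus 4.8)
The plan is to convert each singular vector of Theorem~\ref{THE:SV} into a differential operator through the right action $\pi_R$, and then to invoke the intertwining argument of \cite{Dob} to conclude invariance. Nothing beyond Theorem~\ref{THE:SV}, the explicit realization of $\pi_R$ computed above, and the general machinery of \cite{Dob} is needed.

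First I would express each singular vector in the form $\mathcal{P}\ket{h,f}$ with $\mathcal{P}\in U(\g_+)$. From the basis \eqref{VMbasis} this reading is immediate: in case~(2) the singular vector $\ket{0,0,1}$ equals $d_-\ket{h,f}$; in case~(1) the singular vector $\ket{0,1,0}$ equals $d_+\ket{h,f}$; and in case~(3) the singular vector \eqref{SVformula} equals $\bigl[(A_+)^n+\tfrac{n}{f-n}(A_+)^{n-1}d_+d_-\bigr]\ket{h,f}$, which, because $A_+$ commutes with $d_{\pm}$, I would rewrite as $(A_+)^{n-1}\bigl[A_+ + \tfrac{n}{f-n}\,d_+d_-\bigr]\ket{h,f}$.

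Next I would apply $\pi_R$. Using the formulae $\pi_R(A_+)=\partial/\partial x$, $\pi_R(d_+)=\partial/\partial\psi_+ + 2\psi_-\,\partial/\partial x$, $\pi_R(d_-)=\partial/\partial\psi_-$ derived above, together with the fact that $\pi_R$ realizes $U(\g_+)$ faithfully as differential operators on $C_\Lambda$, the three operators $\pi_R(\mathcal{P})$ coincide exactly with those displayed in the Proposition; in the third case the factor $(\partial/\partial x)^{n-1}$ may be moved to the right past the bracket because $\partial/\partial x$ commutes with $\partial/\partial\psi_{\pm}$ and with $\psi_-$. It then remains to establish invariance, and here I would follow \cite{Dob} verbatim: a singular vector $\mathcal{P}\ket{h,f}\in M(h,f)$ generates a submodule isomorphic to a Verma module $M(h',f')$, so the assignment $\ket{h',f'}\mapsto\mathcal{P}\ket{h,f}$ extends to a $\g$-module homomorphism $M(h',f')\to M(h,f)$; transported to the function realization, where $\pi_R$ builds the module and the symmetry transformations are generated by the left action $\pi_L$, this intertwining property says precisely that $\pi_R(\mathcal{P})$ commutes with every $\pi_L(X)$, $X\in\g$, i.e. that the equation $\pi_R(\mathcal{P})\varphi=0$ is preserved by $\mathcal{G}$.

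The one point requiring genuine care — and what I expect to be the main obstacle — is that this mechanism was designed in \cite{Dob} for semisimple Lie groups and must be shown to survive the passage to the $\ZZ$ graded (color-supergroup) setting; concretely one needs $[\pi_L(X),\pi_R(Y)]=0$ for all $X,Y\in\g$ with the correct graded signs and with the $\ZZ$ graded Grassmann derivatives of \cite{AiSe}. Rather than recompute these brackets, I would exploit the $\zeta$-trick of \S\ref{SEC:VERMA}: the change of basis \eqref{NewBasis} presents $\g$ as an ordinary $\mathbb{Z}_2$ graded superalgebra, for which the commuting of the left and right regular actions and the intertwining property of singular vectors are standard, and I would then translate the conclusion back to the $\ZZ$ grading. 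Steps 1 and 2 are purely computational and carry no difficulty.
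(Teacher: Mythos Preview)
Your proposal is correct and follows exactly the argument the paper itself sketches in the paragraphs preceding the Proposition: read off $\mathcal{P}\in U(\g_+)$ from each singular vector of Theorem~\ref{THE:SV}, apply the right-action realization $\pi_R$ to obtain the displayed operators, and invoke the intertwining mechanism of \cite{Dob} (noting, as the paper does implicitly via the $\zeta$-trick of \S\ref{SEC:VERMA}, that the $\mathbb{Z}_2$ reduction makes the Dobrev argument applicable). Your write-up is in fact more detailed than the paper's, which states the Proposition without a separate proof.
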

 
Up to here our analysis of the representations of $\g$ is done by converting the basis of $\g$ into a $\mathbb{Z}_2$ grading ones. 
Thus the invariant equations are written in terms of the variables of degree $(0,0)$ and $(0,1). $ 
We now rewrite the invariant differential equations in a form also containing the variables of degree $(1,0)$ and $(1,1).$ 
Let $\psi $ and $ \theta $ be $\ZZ$ graded Grassmann number of degree $(0,1)$ and $ (1,0),$ respectively.  
Taking the basis change \eqref{NewBasis} into account we set 
\begin{equation}
 \psi_{\pm} = \frac{1}{\sqrt{2}} ( \psi \pm \zeta \theta).
\end{equation} 
and replace $ \psi_{\pm} $ with $\psi $ and $ \theta. $ 
Then the independent variables of the invariant differential equations become $ x, \psi $ and $ \theta$ and the equations yield as follows:  
\begin{align*}
   & \left( \frac{\partial}{\partial \psi} + \zeta \frac{\partial}{\partial \theta} \right) \varphi= 0,  \\
   & \left[ \frac{\partial}{\partial \psi} - \zeta \frac{\partial}{\partial \theta} +2(\psi - \zeta \theta) \frac{\partial}{\partial x} \right] \varphi=0, \\\
   &  \left[ \frac{\partial}{\partial x} + \frac{n}{f-n} \left(  \psi \frac{\partial}{\partial \psi} + \theta \frac{\partial}{\partial \theta}  - \zeta \Big( \psi \frac{\partial}{\partial \theta}  + \theta \frac{\partial}{\partial \psi} \Big) \right) \frac{\partial}{\partial x} - \frac{n \zeta}{f-n}  \frac{\partial^2}{\partial \psi \partial \theta} \right]
   \left( \frac{\partial}{\partial x} \right)^{n-1}  \varphi = 0.
\end{align*}

%
\section{Concluding remarks.}  \label{SEC:CR}

We studied lowest weight representations of the $\ZZ$ graded superalgebra given by Rittenberg and Wyler. 
Reducibility of the Verma modules over the  $\ZZ$ graded superalgebra is shown by explicit construction of singular vectors. 
As an application of the present scheme invariant partial differential equations defined on the space of functions whose variables are $\ZZ$ graded Grassmann numbers are obtained. 
The present scheme may apply to other color superalgebras and one may find many related differential equations. 
Since differential equations are very basic objects in theoretical physics and many areas of mathematics, 
the present line of investigation will reveal a connection of  color superalgebras with different areas of science. 

We close this paper with two possible lines of further investigations on applications of color superalgebras. 
The first one is mathematical application. It is well known that representations of Lie algebras and Lie groups are closely related to orthogonal polynomials. We anticipate that representations of color algebras or groups also have such connection. 
The second one is physical application. One may use a nonlinear realization method to write down Lagrangians which are invariant under a color supergroup, since the method is originally defined for Lie groups then extended to supergroups. 
Color supergroups are natural generalization of supergroups, thus nonlinear realization methods will be generalized to any color supergroups. If this is done, one may find a dynamical system whose symmetry is governed by a color supergroup. 
This would open the way to physical applications of color supergroups.


\subsection*{Acknowledgment.}

The author is supported by the  grants-in-aid from JSPS (Contract No. 26400209).

%
%

\vspace{1cm}
\noindent
Department of Physical Science \\
Graduate School of Science \\
Osaka Prefecture University \\
Nakamozu Campus, Sakai, Osaka 599-8531 \\
Japan

\end{document}